\theoremstyle{plain}
\newtheorem{thm}{Theorem}
\newtheorem{lemma}[thm]{Lemma}
\theoremstyle{definition}
\newtheorem{definition}{Definition}
\theoremstyle{definition}
\newtheorem*{red-gadget}{Red gadget}
\newtheorem*{blue-gadget}{Blue gadget}
\newcommand{\Apx}[1]{$#1$-approx\-i\-mation}
\newcommand{\E}{\mathbf{E}}
\renewcommand{\Pr}{\mathbf{Pr}}
\newcommand{\out}{\text{\upshape\sffamily out}}
\newcommand{\white}{\textsf{white}}
\newcommand{\black}{\textsf{black}}
\newcommand{\red}{\textsf{red}}
\newcommand{\blue}{\textsf{blue}}
\newcommand{\local}{$\mathcal{LOC\mspace{-3mu}AL}$}
\newcommand{\Alg}{\ensuremath{\mathcal{A}}}
\newcommand{\RC}{\textsc{Recut}}
\newcommand{\BVC}{\textsc{2-VC}}
\newcommand{\mydoi}[2]{\href{http://dx.doi.org/#1}{#2}}
\newcommand{\myeprint}[2]{\href{http://arxiv.org/abs/#1}{#2}}
\newcommand{\myurl}[2]{\href{#1}{#2}}
\begin{document}

\title{\bf No Sublogarithmic-time Approximation Scheme for Bipartite Vertex Cover}
\author{Mika G\"o\"os and Jukka Suomela \\[1ex] Helsinki Institute for Information Technology HIIT, \\ Department of Computer Science, University of Helsinki}
\date{}
\maketitle

\begin{abstract}
K\"onig's theorem states that on bipartite graphs the size of a maximum matching equals the size of a minimum vertex cover. It is known from prior work that for every $\epsilon > 0$ there exists a \emph{constant-time} distributed algorithm that finds a \Apx{(1+\epsilon)} of a maximum matching on 2-coloured graphs of bounded degree. In this work, we show---somewhat surprisingly---that no \emph{sublogarithmic-time} approximation scheme exists for the dual problem: there is a constant $\delta > 0$ so that no randomised distributed algorithm with running time $o(\log n)$ can find a \Apx{(1+\delta)} of a minimum vertex cover on 2-coloured graphs of maximum degree 3. In fact, a simple application of the Linial--Saks (1993) decomposition demonstrates that this lower bound is tight.

Our lower-bound construction is simple and, to some extent, independent of previous techniques. Along the way we prove that a certain cut minimisation problem, which might be of independent interest, is hard to approximate locally on expander graphs.
\end{abstract}

\section{Introduction} \label{sec:intro}

Many graph optimisation problems do not admit an exact solution by a fast distributed algorithm. This is true not only for most NP-hard optimisation problems, but also for problems that can be solved using sequential polynomial-time algorithms. This work is a contribution to the \emph{distributed approximability} of such a problem: the minimum vertex cover problem on bipartite graphs---we call it $\BVC$, for short.

Our focus is on \emph{negative} results: We prove an optimal (up to constants) time lower bound $\Omega(\log n)$ for a randomised distributed algorithm to find a close-to-optimal vertex cover on bipartite 2-coloured graphs of maximum degree $\Delta = 3$. In particular, this rules out the existence of a sublogarithmic-time approximation scheme for $\BVC$ on sparse graphs.

Our lower bound result exhibits the following features:
\begin{itemize}
\item The proof is relatively simple as compared to the strength of the result;
this is achieved through an application of \emph{expander graphs} in the lower-bound construction.
\item To explain the source of hardness for $\BVC$ we introduce a certain \emph{distributed cut minimisation problem}, which might have applications elsewhere.
\item Many previous distributed inapproximability results are based on the hardness of \emph{local symmetry breaking}. This is not the case here: the difficulty we pinpoint for $\BVC$ is in the semi-global task of \emph{gluing together} two different types of local solutions.
\item Our result states that \emph{K\"onig's theorem is non-local}---see Section~\ref{sec:konig}.
\end{itemize}

\subsection{The \texorpdfstring{\local}{LOCAL} model}

We work in the standard \local{} model of distributed computing~\citep{linial92locality,peleg00distributed}. As input we are given an undirected graph $G=(V,E)$. We interpret $G$ as defining a communication network: the nodes $V$ host processors, and two processors can communicate directly if they are connected by an edge. All nodes run the same distributed algorithm $\Alg$. The computation of $\Alg$ on $G$ starts out with every node $v\in V$ knowing an upper bound on $n=|V|$ and possessing a globally unique $O(\log n)$-bit identifier $\operatorname{ID}(v)$; for simplicity, we assume that $V\subseteq \{1,2,\ldots,\operatorname{poly}(n)\}$ and $\operatorname{ID}(v)=v$. Also, we assume that the processors have access to independent (and unlimited) sources of randomness. The computation proceeds in synchronous communication rounds. In each round, all nodes first perform some local computations and then exchange (unbounded) messages with their neighbours. After some $r$ communication rounds the nodes stop and produce local outputs. Here $r$ is the \emph{running time} of $\Alg$ and the output of $v$ is denoted $\Alg(G,v)$.

The fundamental limitation of a distributed algorithm with running time $r$ is that the output $\Alg(G,v)$ can only depend on the information available in the subgraph $G[v,r]\subseteq G$ induced on the vertices in the radius-$r$ neighbourhood ball
\[
B_G(v,r)=\{u\in V:\operatorname{dist}_G(v,u)\leq r\}.
\]
Conversely, it is well known that an algorithm $\Alg$ can essentially discover the structure of $G[v,r]$ in time $r$. Thus, $\Alg$ can be thought of as a function mapping $r$-neighbourhoods $G[v,r]$ (together with the additional input labels and random bits on $B_G(v,r)$) to outputs.

While the \local{} model abstracts away issues of network congestion and asynchrony, this only makes our \emph{lower-bound} result stronger.

\subsection{Our result}
Below, we concentrate on bipartite 2-coloured graphs $G$. That is, $G$ is not only bipartite (which is a global property), but every node $v$ is informed of the bipartition by an additional input label $c(v)$, where $c\colon V\to\{\white,\black\}$ is a proper 2-colouring of $G$.

\begin{definition} In the $\BVC$ problem we are given a 2-coloured graph $G=(G,c)$ and the objective is to output a minimum-size vertex cover of $G$.
\end{definition}

A distributed algorithm $\Alg$ computes a vertex cover by outputting a single bit $\Alg(G,v)\in\{0,1\}$ on a node $v$ indicating whether $v$ is included in the solution. This way, $\Alg$ computes the set $\Alg(G) := \{v \in V : \Alg(G,v) = 1 \}$. Moreover, we say that $\Alg$ computes an \emph{\Apx{\alpha}} of $\BVC$ if $\Alg(G)$ is a vertex cover of $G$ and
\[
|\Alg(G)| \leq \alpha \cdot \textsf{OPT}_G,
\]
where $\textsf{OPT}_G$ denotes the size of a minimum vertex cover of $G$.

Our main result is the following.
\begin{thm} \label{thm:main}
There exists a $\delta > 0$ such that no randomised distributed algorithm with running time $o(\log n)$ can find a \Apx{(1+\delta)} of $\BVC$ on graphs of maximum degree $\Delta = 3$.
\end{thm}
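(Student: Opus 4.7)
The plan is to reduce a distributed cut-minimisation problem---call it \RC{}---to \BVC{}, and then to prove that \RC{} is hard to approximate locally on bipartite expanders. The intuition is that, by König, the two ``canonical'' vertex covers of a connected bipartite graph are its two colour classes, and a minimum cover is obtained by locally preferring the cheaper side. In an instance where the cheaper side flips from region to region, any algorithm must effectively choose a $\{\red,\blue\}$ label per region, and the cover size picks up a penalty proportional to the number of edges whose endpoints lie in oppositely-labelled regions. On a bounded-degree expander every non-trivial bipartition has linear edge boundary, so even a small constant fraction of misclassified regions blows up the cost by a constant factor.

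First, I would formalise \RC{}: the input is a constant-degree expander with a $\{\red,\blue\}$ ``preferred'' label on each vertex, and the algorithm must output a new $\{\red,\blue\}$ label per vertex minimising a fixed linear combination of (i) disagreements with the preference and (ii) edges whose endpoints receive different output labels. The hardness claim is that any $o(\log n)$-round randomised algorithm produces a labelling whose cost exceeds the optimum by a constant factor on some instance. I would prove this via an indistinguishability argument: on a degree-3 expander two instances---one in which the optimum labels an entire cluster $\red$, another in which it labels the same cluster $\blue$---have identical $o(\log n)$-radius neighbourhoods around the vertices of that cluster, so the algorithm must agree with at least one of them on only half the cluster in expectation; expansion then forces a linear-in-$n$ cut penalty.

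Second, I would build a max-degree-3 bipartite graph $G$ out of an \RC{} instance by replacing each expander vertex with a \emph{red gadget} or a \emph{blue gadget} according to its preferred label, and wiring these gadgets together along the expander edges. The gadgets are to be designed so that a minimum vertex cover inside a red gadget takes essentially the \white{} side, inside a blue gadget the \black{} side, and whenever a red gadget is adjacent to a blue gadget there is an edge whose two endpoints are not already covered by these canonical choices, forcing an extra vertex. The output labelling induced on gadgets by any cover $\Alg(G)$ therefore gives a feasible \RC{} solution whose cost is, up to additive terms absorbed by the minimum cover, equal to $|\Alg(G)|-\textsf{OPT}_G$ plus the disagreements with preferences. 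Consequently a $(1+\delta)$-approximation for \BVC{} yields a $(1+\Theta(\delta))$-approximation for \RC{}, contradicting Step~1 for sufficiently small $\delta>0$.

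The main obstacle will be engineering the red and blue gadgets under a degree-3 budget: they must be bipartite with a proper 2-colouring consistent across the whole graph, admit two nearly-optimal local covers corresponding to the two labels, make ``crossing'' a $\red$/$\blue$ boundary strictly costly with no cheaper local workaround, and still leave each gadget-interface vertex room for an expander edge. A secondary difficulty is the \RC{} lower bound itself, since the standard Linial / Kuhn--Moscibroda--Wattenhofer machinery is phrased in terms of girth rather than expansion; I would need a coupling between two locally-isomorphic expander instances combined with an edge-boundary lower bound, which is where the expander property does the real work that high girth would not.
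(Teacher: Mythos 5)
Your high-level plan matches the paper's: reduce a distributed recut problem to \BVC{} via per-vertex gadgets, then prove the recut problem is hard on expanders. But your sketch misses the one lemma that actually makes the recut lower bound go through, and it reformulates the recut problem in a way that weakens it.

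The paper's \RC{} problem has \emph{no} disagreement penalty at all: the objective is to minimise the cut boundary $\partial\ell_\out$, and the input labels enter only through two hard boundary conditions (all-red input must give all-red output, all-blue must give all-blue). These boundary conditions are what drive the lower bound, via the following hybrid argument (Lemma~\ref{lem:fooling}): order the vertices arbitrarily and flip them one at a time from red to blue, producing a chain of inputs $\ell^0,\dots,\ell^n$. Flipping $v_i$ changes the output only inside $B_G(v_i,r)$, which has size $5^r=o(n)$, so the number of red outputs moves by $o(n)$ per step; since the endpoints of the chain are forced to produce $n$ and $0$ red outputs, some intermediate input yields $n/2-o(n)$ red outputs. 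Expansion then makes that cut linear. Your ``indistinguishability argument between two instances that label a cluster differently'' is not this argument, and as stated it does not produce a cut lower bound: if the two instances are locally identical around the cluster, the algorithm gives the \emph{same} output there, which tells you nothing about the edge boundary of that output. The continuity/interpolation idea is the missing ingredient, and it is what lets girth (tree-like neighbourhoods, for the gadget analysis) and expansion (for the balanced-cut penalty) each do a separate job, rather than your having to ``replace'' girth with expansion.

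Two further gaps. First, the paper's blue gadget is chosen \emph{adaptively}: run $\Alg$ on the all-red instance, observe whether it prefers white or black endpoints of the matched edges, and design the blue gadget to punish that choice; without this branching step there is no guarantee the gadgets actually extract a recut from $\Alg$'s cover. Second, for randomised algorithms it is not enough to show the recut is balanced in expectation---the paper invokes Janson's large-deviation bound for partly dependent variables (dependency graph $G^{2r}$, chromatic number $o(n)$) to get concentration; your proposal doesn't address randomisation at all, and without a concentration step the expectation bound \eqref{eq:expectation} alone does not contradict a high-probability approximation guarantee.
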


A matching time upper bound follows directly from the well-known network decomposition algorithm due to Linial and Saks~\citep{linial93low}.
\begin{thm}
For every $\epsilon > 0$ a \Apx{(1+\epsilon)} of $\BVC$ can be computed with high probability in time $O(\epsilon^{-1} \log n)$ on graphs of maximum degree $\Delta = O(1)$.
\end{thm}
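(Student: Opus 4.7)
The plan is to combine a low-diameter randomised network decomposition with exact local solutions obtained from K\"onig's theorem. First, I would apply a parametric Linial--Saks decomposition to partition $V$ into clusters of weak diameter $D=\Theta(\epsilon^{-1}\log n)$ such that every edge of $G$ crosses a cluster boundary with probability at most $\epsilon/(2\Delta)$. Such a decomposition is computable in $O(D)$ rounds using exponentially distributed radii drawn around random cluster centres; the diameter bound (and therefore the running-time bound) holds with high probability by standard exponential-tail arguments.

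Next, within each cluster $C$ I would designate the node of smallest identifier as a leader. Because $C$ has diameter $O(D)$, the leader can gather the full topology together with the 2-colouring of $G[C]$ in $O(D)$ rounds, compute an \emph{optimal} vertex cover of the bipartite graph $G[C]$ via K\"onig's theorem (a purely local computation, polynomial in $|C|$), and broadcast the assignment back to all of $C$. In one additional round I would include both endpoints of every cross-cluster edge in the output, so that no boundary edge remains uncovered.

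For the approximation ratio, every intra-cluster edge is covered by its local optimum and every boundary edge by its two endpoints, so the output is a vertex cover of $G$. The restriction of a global minimum cover to any cluster is still a cover of that cluster, giving $\sum_C \textsf{OPT}(G[C])\le\textsf{OPT}_G$. Combining with $|E|\le\Delta\cdot\textsf{OPT}_G$ and the bound on expected boundary edges $E_{\mathrm{bad}}$,
\[
\E\bigl|\Alg(G)\bigr|\;\le\;\sum_C \textsf{OPT}(G[C])+2\,\E|E_{\mathrm{bad}}|\;\le\;\textsf{OPT}_G+\epsilon\cdot|E|/\Delta\cdot 2 \;\le\;(1+\epsilon)\,\textsf{OPT}_G.
\]

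The hard part will be promoting this expectation guarantee to a \emph{high-probability} one without inflating the round complexity. A concentration argument exploiting bounded degree---each vertex's contribution to $|E_{\mathrm{bad}}|$ depends only on the random radii of cluster centres within distance $O(D)$, yielding short-range dependence---should deliver $|E_{\mathrm{bad}}|\le 2\epsilon|E|$ w.h.p., and hence a $(1+O(\epsilon))$-approximation w.h.p. Rescaling $\epsilon$ by a constant factor then yields the stated claim. Everything else---leader election, intra-cluster gathering, and the combinatorial K\"onig step---is entirely routine inside a cluster of diameter $D$.
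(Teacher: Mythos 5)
Your approach coincides with the paper's: run a Linial--Saks low-diameter decomposition, compute an exact vertex cover inside each low-weak-diameter cluster, and absorb the boundary into the solution, charging the extra cost against $\textsf{OPT}_G = \Omega(n)$. The paper avoids the concentration step you flag as ``the hard part'' by invoking the form of Linial--Saks (the \textsl{Construct\_Block} subroutine) in which the surviving set $S$ already satisfies $|S|\geq(1-\epsilon)n$ with high probability, so no separate large-deviation argument over cut edges is needed.
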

\begin{proof}
The subroutine \textsl{Construct\_Block} in the algorithm of Linial and Saks~\citep{linial93low} computes, in time $r=O(\epsilon^{-1} \log n)$, a set $S\subseteq V$ with the following properties. Each component in the subgraph $G[S]$ induced by $S$ has \emph{weak diameter} at most $r$, i.e., $\operatorname{dist}_G(u,v) \leq r$ for each pair $u,v\in S$ belonging to the same component of $G[S]$. Moreover, they prove that, w.h.p.,
\[
|S| \geq (1-\epsilon)n.
\]

Let $C$ be a component of $G[S]$. Every node of $C$ can discover the structure of $C$ in time $O(r)$ by exploiting its weak diameter. Thus, every node of $C$ can internally compute \emph{the same} optimal solution of $\BVC$ on $C$. We can then output as a vertex cover for $G$ the union of the optimal solutions at the components together with the vertices $V\smallsetminus S$. This results in a solution of size at most
\[
\textsf{OPT}_{G[S]} + \epsilon n.
\]
But since $\textsf{OPT}_G \geq |E|/\Delta = \Omega(n)$ for connected $G$, this is a \Apx{(1+O(\epsilon))} of $\BVC$.
\end{proof}

\subsection{K\"onig duality} \label{sec:konig}

The classic theorem of K\"onig (see, e.g., Diestel \citep[\S2.1]{diestel05graph}) states that, on bipartite graphs, the size of a maximum matching equals the size of a minimum vertex cover. A modern perspective is to view this result through the lens of linear programming (LP) duality. The LP relaxations of these problems are \emph{the fractional matching problem} (primal) and \emph{the fractional vertex cover problem} (dual):
\begin{align*}
 \text{maximise}\quad& {\sum_{e\in E} x_e}
&\text{minimise}\quad& {\sum_{v\in V} y_v} \\
 \text{subject to}\quad&
	{\sum_{e:\,v\in e} x_e \leq 1,\enspace \forall v\in V}
&\text{subject to}\quad&
	{\sum_{v:\,v\in e} y_v \geq 1,\enspace \forall e\in E}\\
 &\mathbf{x}\geq \mathbf{0}
&&\mathbf{y}\geq \mathbf{0}
\end{align*}
In particular, on bipartite graphs, the above LPs do not have an integrality gap (see, e.g., Papadimitriou and Steiglitz \citep[\S13.2]{papadimitriou98combinatorial}): among the optimal feasible solutions are integral vectors $\mathbf{x} \in \{0,1\}^E$ and $\mathbf{y}\in\{0,1\}^V$ that correspond to maximum matchings and minimum vertex covers.

In the context of distributed algorithms, the following is known on (bipartite) \emph{bounded degree graphs}:
\begin{enumerate}[label=(\arabic*)]
\item \emph{Primal LP and dual LP admit local approximation schemes.} As part of their general result, Kuhn et al.~\citep{kuhn06price} give a strictly local \Apx{(1+\epsilon)} scheme for the above LPs. Their algorithms run in constant time independent of the number of nodes.
\item \emph{Integral primal admits a local approximation scheme.} \AA{}strand et al.~\citep{astrand10weakly-coloured} describe a strictly local \Apx{(1+\epsilon)} scheme for the maximum matching problem on 2-coloured graphs. Again, the running time is a constant independent of the number of nodes.
\item \emph{Integral dual does not admit a local approximation scheme.} The present work shows---in contrast to the above positive results---that there is no local approximation scheme for $\BVC$ even when $\Delta = 3$.
\end{enumerate}

\subsection{Related lower bounds}

There are relatively few independent methods for obtaining negative results for distributed approximation in the \local{} model. We list three main sources.

\paragraph{Local algorithms.}
Linial's~\citep{linial92locality} lower bound $\Omega(\log^* n)$ for $3$-colouring a cycle together with the Ramsey technique of Naor and Stockmeyer~\citep{naor95what} establish basic limitations on finding \emph{exact solutions} strictly locally in constant time. These impossibility results were later extended to finding \emph{approximate solutions} on cycle-like graphs by Lenzen and Wattenhofer~\citep{lenzen08leveraging} and Czygrinow et al.~\citep{czygrinow08fast}. A recent work~\citep{goos12local-approximation} generalises these techniques even further and proves that deterministic local algorithms in the \local{} model are often no more powerful than algorithms running on anonymous port numbered networks. For more information on this line of research, see the survey of local algorithms~\citep{suomela09survey}.

Here, the inapproximability results typically exploit the inability of a local algorithm to break local symmetries. By contrast, in this work, we consider the case where the local symmetry is already broken by a $2$-colouring.

\paragraph{KMW-bounds.} Kuhn, Moscibroda and Wattenhofer~\citep{kuhn04what,kuhn06price,kuhn10local} prove that any randomised algorithm for computing a constant-factor approximation of minimum vertex cover on general graphs requires time $\Omega(\sqrt{\log n})$ and $\Omega(\log \Delta)$. Roughly speaking, their technique consists of showing that a fast algorithm cannot tell apart two adjacent nodes $v$ and $u$, even though it is globally more profitable to include $v$ in the vertex cover and exclude $u$ than conversely.

The lower-bound graphs of Kuhn et al.\ are necessarily of unbounded degree: on bounded degree graphs the set of all non-isolated nodes is a constant factor approximation of a minimum vertex cover. By contrast, our lower-bound graphs are of bounded degree and they forbid close-to-optimal approximation of $\BVC$.

\paragraph{Sublinear-time centralised algorithms.} Parnas and Ron~\citep{parnas07approximating} discuss how a fast distributed algorithm can be used as \emph{solution oracle} to a centralised algorithm that approximates parameters of a sparse graph $G$ in sublinear time given a randomised query access to $G$. Thus, lower bounds in this model of computation also imply lower bounds for distributed algorithms. In particular, an argument of Trevisan (presented in \citep{parnas07approximating}) implies that computing a \Apx{(2-\epsilon)} of a minimum vertex cover requires $\Omega(\log n)$ time on $d$-regular graphs, where $d=d(\epsilon)$ is sufficiently large.

We note that $\BVC$ is easy to approximate in this model: Nguyen and Onak~\citep{nguyen08constant-time} give a centralised constant-time algorithm to approximate the size of a maximum matching on a graph $G$. If we are promised that $G$ is bipartite, the same algorithm approximates the size of $\BVC$ by K\"onig duality.

\section{Deterministic lower bound} \label{sec:deterministic}

To best explain the basic idea of our lower bound result, we first prove Theorem~\ref{thm:main} for a toy model that we define in Section~\ref{ssec:toy}; in this model, we only consider a certain class of deterministic distributed algorithms in anonymous networks. Later in Section~\ref{sec:randomised} we will show how to implement the same proof technique in a much more general setting: randomised distributed algorithms in networks with unique identifiers.

In the present section, we find a source of hardness for $\BVC$ as follows. First, we argue that any approximation algorithm for the $\BVC$ problem also solves a certain cut minimisation problem called $\RC$. We then show that $\RC$ is hard to approximate locally, which implies that $\BVC$ must also be hard to approximate locally.

\subsection{Toy model of deterministic algorithms}\label{ssec:toy}

Throughout this section we consider deterministic algorithms $\Alg$ running in time $r=o(\log n)$ that operate on \emph{input-labelled anonymous networks} $(G,\ell)$, where $G=(V,E)$ and $\ell$ is a labelling of $V$. More precisely, we impose the following additional restrictions in the \local{} model:
\begin{itemize}
 \item The nodes of $G$ are not given random bits as input.
 \item The output of $\Alg$ is invariant under reassigning node identifiers. That is, if $G'=(V',E')$ is isomorphic to $G$ via a mapping $f\colon V'\to V$, then for $v\in V$,
\[
 \Alg(G,\ell,v) = \Alg(G',\ell\circ f, f^{-1}(v)).
\]
\end{itemize}
Put otherwise, the only symmetry breaking information we supply $\Alg$ is the radius-$r$ neighbourhood topology together with the input labelling---the nodes are anonymous and do not have unique identifiers.

We will also consider graphs $G$ that are \emph{directed}. In this case, the directions of the edges are merely additional symmetry-breaking information; they do not restrict communication.

\begin{figure}[b]
    \centering
    \includegraphics[page=1]{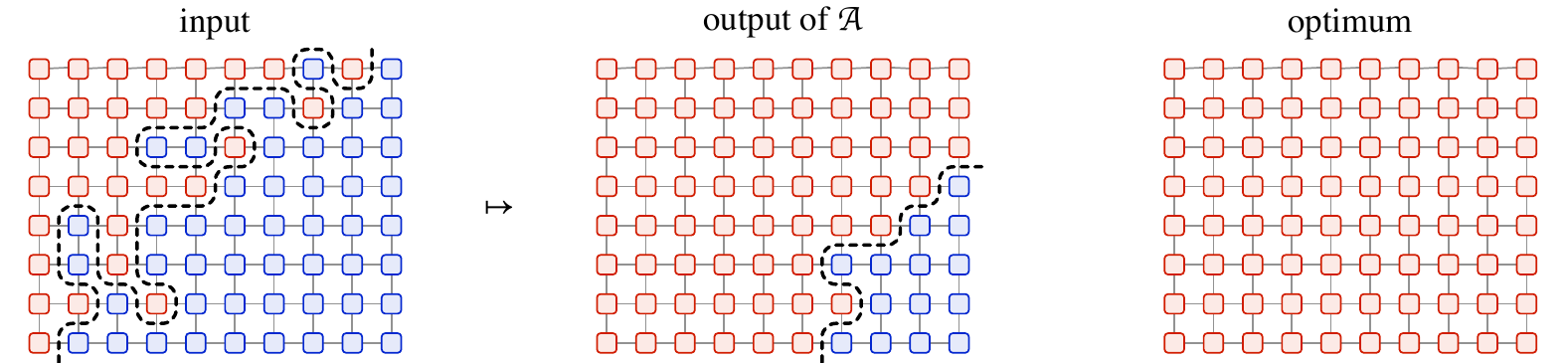}
    \caption{The $\RC$ problem. In this example, we have used a simple distributed algorithm~$\Alg$ to find a recut $\ell_\out$ with a small boundary $\partial\ell_\out$: a node outputs red iff there is a red node within distance $r = 3$ in the input. While the solution is not optimal, in a grid graph the boundary will be relatively small. However, our lower bound shows that any fast distributed algorithm---including algorithm~$\Alg$---fails to produce a small boundary in some graph.}\label{fig:recut}
\end{figure}

\subsection{The Recut problem}

In the following, we consider partitions of $V$ into red and blue colour classes as determined by a labelling $\ell\colon V\to\{\red,\blue\}$. We write $\partial\ell$ for the fraction of edges crossing the red/blue cut, i.e.,
\[
\partial\ell := \frac{e(\ell^{-1}(\red),\ell^{-1}(\blue))}{|E|}.
\]

\begin{definition} In \emph{the \RC{} problem} we are given a labelled graph $(G,\ell)$ as input and the objective is to compute an output labelling (\emph{a recut}) $\ell_\out$ that minimises $\partial\ell_\text{out}$ subject to the following constraints:
(a)~If $\ell(V) = \{\red\}$, then $\ell_\out(V) = \{\red\}$.
(b)~If $\ell(V) = \{\blue\}$, then $\ell_\out(V) = \{\blue\}$.
\end{definition}
In words, if we have an all-red input, we have to produce an all-red output, and if we have an all-blue input, we have to produce an all-blue output. Otherwise the output can be arbitrary. See Figure~\ref{fig:recut} for an illustration.

Needless to say, the global optimum for an algorithm $\Alg$ would be to produce a constant output labelling $\ell_\Alg$ (either all red or all blue) having $\partial\ell_\Alg = 0$. However, a distributed algorithm $\Alg$ can only access the values of the input labelling $\ell$ in its local radius-$r$ neighbourhood: when encountering a neighbourhood $v\in U\subseteq V$ with $\ell(U)=\{\red\}$, the algorithm is forced to output red at $v$ to guarantee satisfying the global constraint~(a), and when encountering a neighbourhood $v\in U\subseteq V$ with $\ell(U)=\{\blue\}$, the algorithm is forced to output blue at $v$ to satisfy~(b). Thus, if a connected graph $G$ has two disjoint $r$-neighbourhoods $U,U'\subseteq V$ with $\ell(U)=\{\red\}$ and $\ell(U')=\{\blue\}$ the algorithm $\Alg$ cannot avoid producing at least some red/blue edge boundary. Indeed, the best we can hope $\Alg$ to achieve is a recut $\ell_\Alg$ of size $\partial\ell_\Alg\leq \epsilon$ for some small constant $\epsilon>0$.

\paragraph{Discussion.}
The $\RC$ problem models the following abstract high-level challenge in designing distributed algorithms: Each node in a local neighbourhood $U\subseteq V$ can, in principle, internally compute a completely \emph{locally optimal} solution for (the subgraph induced by) $U$, but difficulties arise when deciding which of these proposed solution are to be used in the final distributed output. In particular, when the \emph{type} of the produced solution changes from one (e.g., red) to another (e.g., blue) across a graph $G$ one might have to introduce suboptimalities to the solution at the (red/blue) boundary in order to glue together the different types of local solutions.

In fact, the $\RC$ problem captures the first non-trivial case of this phenomenon with only \emph{two} solution types present. One can think of the input labelling $\ell$ as recording the \emph{initial preferences} of the nodes whereas the output labelling $\ell_\Alg$ records how an algorithm $\Alg$ decides to combine these preferences into the final unified output. In the end, our lower-bound strategy will be to argue that any $\Alg$ can be forced into producing too large an edge boundary $\partial\ell_\Alg$ resulting in too many suboptimalities in the produced output.

Next, we show how the above discussion is made concrete in the case of the $\BVC$ problem.

\subsection{The reduction}

We call a graph $G$ \emph{tree-like} if all the $r$-neighbourhoods in $G$ are trees, i.e., $G$ has girth larger than $2r+1$. Furthermore, if $G$ is directed, we say it is \emph{balanced} if $\operatorname{in-degree}(v) = \operatorname{out-degree}(v)$ for all vertices $v$. We note that a deterministic algorithm $\Alg$ produces the same output on every node of a balanced regular tree-like digraph $G$, because such a graph is \emph{locally homogeneous}: all the $r$-neighbourhoods of $G$ are pairwise isomorphic.

Using this terminology we give the following reduction.

\begin{thm} \label{thm:reduction}
Suppose $\Alg$ computes a \Apx{(1+\epsilon)} of $\BVC$ on graphs of maximum degree $\Delta = 3$. Then, there is an algorithm (with running time $r$) that finds a recut $\ell_\Alg$ of size $\partial\ell_\Alg = O(\epsilon)$ on balanced $4$-regular tree-like digraphs.
\end{thm}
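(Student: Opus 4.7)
The plan is to reduce $\RC$ on $(G,\ell)$ to $\BVC$ by replacing every vertex $v$ of the $4$-regular digraph $G$ with a small constant-size gadget $X_v$ whose ``colour'' is dictated by $\ell(v)$: a \emph{red gadget} if $\ell(v)=\red$, a \emph{blue gadget} if $\ell(v)=\blue$. Each directed edge $(u,v)$ of $G$ is realised by identifying (or connecting by a short path) a designated ``out'' port of $X_u$ with a designated ``in'' port of $X_v$. Since $G$ has in-degree and out-degree $2$, each gadget needs four interface ports, and the internal wiring is chosen so that the resulting graph $H=H(G,\ell)$ is bipartite, properly $2$-coloured, and has maximum degree $3$. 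The reduction is strictly local: the $r'$-neighbourhood of any node $x\in V(H)$ is determined by the $O(r')$-neighbourhood of the corresponding vertex of $G$, so the hypothetical $\BVC$-algorithm $\Alg$ can be simulated by each node of $G$ within the same asymptotic round complexity. Running $\Alg$ on $H$ produces a vertex cover of $H$; from this cover we read off $\ell_\Alg(v)\in\{\red,\blue\}$ by inspecting the constant number of vertices inside $X_v$ and deciding which of two canonical local configurations the cover resembles.

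\textbf{Gadget design.} The gadgets are to be engineered so that each of them admits exactly two ``clean'' minimum vertex covers, a \emph{red configuration} and a \emph{blue configuration}, that differ in how they occupy the interface ports. The properties we need are:
\begin{itemize}
\item (\emph{Colour preference}) A red gadget has red-configuration VC of size $a$ and blue-configuration VC of size $a+1$; symmetrically for a blue gadget. Any other cover of a gadget uses at least $a+1$ vertices.
\item (\emph{Interface compatibility}) Two adjacent gadgets in the same configuration cover their connecting edge at no extra cost, while two gadgets in opposite configurations force an extra $+1$ vertex on that edge.
\end{itemize}
Given these two properties, any vertex cover $C$ of $H$ satisfies
\[
|C|\;\ge\; a|V|\;+\;\#\{v:\ell_\Alg(v)\ne\ell(v)\}\;+\;\#\{(u,v)\in E(G):\ell_\Alg(u)\ne\ell_\Alg(v)\},
\]
while the $\ell$-respecting configuration everywhere gives $\mathsf{OPT}_H\le a|V|$. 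Combining these with the approximation guarantee $|\Alg(H)|\le(1+\epsilon)\mathsf{OPT}_H$ yields
\[
\#\{\text{boundary edges}\}\;\le\;\epsilon\,a\,|V|\;=\;O(\epsilon)\cdot|E(G)|,
\]
so $\partial\ell_\Alg=O(\epsilon)$ as required.

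\textbf{Verifying the recut constraints.} It remains to check that $\ell_\Alg$ satisfies (a) and (b). If $\ell\equiv\red$, then every gadget in $H$ is a red gadget, and because $G$ is a balanced $4$-regular tree-like digraph, $H$ inherits local homogeneity: every $r$-neighbourhood in $H$ looks the same. By the observation in the preceding subsection, the deterministic anonymous $\Alg$ must then produce a uniform output pattern on $H$, corresponding either to the red configuration everywhere or to the blue configuration everywhere. The latter costs at least $(a+1)|V|$ while the former costs $a|V|$, so for any $\epsilon<1/a$ the $(1+\epsilon)$-approximation guarantee forces the red choice, hence $\ell_\Alg\equiv\red$; symmetrically for the all-blue input.

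\textbf{Main obstacle.} The counting, locality, and symmetry steps are routine once the gadgets are in place; the real work is designing the red and blue gadgets so that all three constraints---bipartiteness, degree bound $\Delta=3$, and the rigid ``unique optimum + unit mismatch penalty at vertices and at edges'' behaviour---hold simultaneously, while still allowing four directed interface ports that match the in/out pattern of a $4$-regular digraph. Getting this combinatorial design correct, so that no cheaper ``mixed'' covers of an individual gadget or of a pair of adjacent gadgets sneak in and destroy the linear lower bound on $|C|$, is the delicate part of the proof.
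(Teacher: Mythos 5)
Your high-level plan (gadget reduction, simulation, read off a recut, count suboptimalities) matches the paper's skeleton, but the gadget specification you propose is inconsistent and the pivotal inequality $\mathsf{OPT}_H \le a|V|$ does not hold under it. With your ``colour preference'' property, the minimum vertex cover of $H=H(G,\ell)$ depends on the input labelling: $\mathsf{OPT}_H = \min_{\ell'}\bigl[a|V| + \#\{v:\ell'(v)\ne\ell(v)\} + \#\{(u,v):\ell'(u)\ne\ell'(v)\}\bigr]$, and on an expander with a roughly balanced input (exactly the inputs that Lemma~\ref{lem:fooling} feeds you) this minimum is $a|V|+\Theta(n)$, not $a|V|$. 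Consequently your derived bound on the number of boundary edges becomes $\epsilon a|V| + (1+\epsilon)\,\Theta(n)$, which is vacuous. Stated differently: if each gadget type prefers its own colour, then $\Alg$ can simply output the $\ell$-respecting configuration, achieve near-optimality, and at the same time report $\ell_\Alg=\ell$, which has large boundary for the fooling labellings. This is exactly the behaviour the reduction must rule out, and your gadgets do not rule it out.

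The paper's gadgets work on a quite different principle, and this is not a detail to be filled in later but the crux of the argument. First, the red gadget is deliberately \emph{ambivalent}: it is a single edge $\{w_v,b_v\}$, so both ``all white'' and ``all black'' are size-1 covers, and by local homogeneity of a balanced regular tree-like digraph $\Alg$ must commit to one of them globally on the all-red instance. Second, the reduction is \emph{adaptive}: the blue gadget is chosen after observing that commitment, by adding a pendant white vertex $w'_v$ to $b_v$, so that the blue gadget's unique size-1 cover ($\{b_v\}$) is the \emph{opposite} of $\Alg$'s preferred choice on red gadgets. Third, and crucially, $\{b_v : v\in V\}$ is a minimum vertex cover of $\Pi(G,\ell)$ of size exactly $n$ for \emph{every} input labelling $\ell$, since the graph has a perfect matching $\{e_v\}$. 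Thus $\mathsf{OPT}$ is a constant independent of $\ell$, the $(1+\epsilon)$ guarantee gives at most $\epsilon n$ gadgets where $\Alg$ uses $\ge 2$ vertices, and the counting argument (a red vertex with a blue out-neighbour is ``bad'' and must use two vertices) converts this into $\partial\ell_\Alg = O(\epsilon)$. In short, the penalty $\Alg$ pays is not built into the gadgets as a per-vertex preference; it emerges from the conflict between $\Alg$'s forced global commitment on red gadgets and the interface constraint imposed by neighbouring blue gadgets. Your write-up acknowledges that the gadget design is the delicate part, but what you specify for it is not achievable and not the mechanism the paper uses.
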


The proof of Theorem~\ref{thm:reduction} follows the usual route: We give a local reduction (i.e., one that can be computed by a local algorithm) that transforms an instance $(G,\ell)$ of $\RC$ into an instance $\Pi(G,\ell)$ of $\BVC$. Then we simulate $\Alg$ on the resulting instance and map the output of $\Alg$ back to a solution of the $\RC$ instance $(G,\ell)$.

Let $G=(V,E)$ be a balanced $4$-regular tree-like digraph and let $\ell\colon V\to\{\red,\blue\}$ be a labelling of~$G$. The instance $\Pi(G,\ell)$ is obtained by replacing each vertex $v\in V$ by one of two local gadgets depending on the label $\ell(v)$. We first describe and analyse simple gadgets yielding instances of $\BVC$ with $\Delta = 4$; the gadgets yielding instances with $\Delta=3$ are described later.

\begin{red-gadget}
The red gadget replaces a vertex $v\in V$ by two new vertices $w_v$ (white) and $b_v$ (black) that share a new edge $e_v:=\{w_v,b_v\}$. The incoming edges of $v$ are reconnected to $w_v$, whereas the outgoing edges of $v$ are reconnected to $b_v$.
\end{red-gadget}

Note that the $\BVC$ instance $\Pi(G,\red)$ (where we denote by $\red$ the constant labelling $v\mapsto \red$) contains $\{e_v:v\in V\}$ as a perfect matching. Since $(G,\red)$ is locally homogeneous, the solutions output by $\Alg$ on the endpoints of $e_v$ are isomorphic across all $v$. Assuming $\epsilon < 1$ it follows that the algorithm $\Alg$ must output either the set of all white nodes or the set of all black nodes on $\Pi(G,\red)$. Our reduction branches at this point: we choose the structure of the blue gadget to counteract this white/black decision made by $\Alg$ on the red gadgets. We describe the case that $\Alg$ outputs all white nodes on $\Pi(G,\red)$; the case of black nodes is symmetric.

\begin{center}
    \includegraphics[page=2]{figs.pdf}
\end{center}

\begin{blue-gadget}
The blue gadget replacing $v\in V$ is identical to the red gadget with the exception that a third new vertex $w'_v$ (white) is added and connected to $b_v$.
\end{blue-gadget}

Similarly as above, we can argue that $\Alg$ outputs exactly the set of all black nodes on the instance $\Pi(G,\blue)$. This completes the description of $\Pi$.

Next, we simulate $\Alg$ on $\Pi(G,\ell)$. The output of $\Alg$ is then transformed back to a labelling $\ell_\Alg\colon V\to\{\red,\blue\}$ by setting
\[
\ell_\Alg(v)=\textsf{blue}\ \iff\ \text{the output of $\Alg$ contains only the black node $b_v$ at the gadget at $v$.}
\]
Note that $\ell_\Alg$ satisfies both feasibility constraints (a) and (b) of $\RC$. It remains to bound the size $\partial\ell_\Alg$ of this recut.

\begin{center}
    \includegraphics[page=3]{figs.pdf}
\end{center}

\paragraph{Recut analysis.} Call a red vertex $v$ in $(G,\ell_\Alg)$ \emph{bad} if $v$ has a blue out-neighbour $u$. By the definition of ``$\ell_\Alg(u)=\blue$'', the vertex cover produced by algorithm $\Alg$ does not contain the white node~$w_u$. Thus to cover the edge $(b_v,w_u)$, the vertex cover has to contain the black node~$b_v$. But by the definition of ``$\ell_\Alg(v)=\red$'', we must have $w_v$ or $w'_v$ in the solution as well. Hence, at least two nodes are used to cover the gadget at~$v$, which is suboptimal as compared to the minimum vertex cover $\{b_v:v\in V\}$, which uses only one node per gadget. This implies that we must have at most $\epsilon |V|$ bad vertices as $\Alg$ produces a \Apx{(1+\epsilon)} of $\BVC$ on $\Pi(G,\ell)$.

\begin{center}
    \includegraphics[page=4]{figs.pdf}
\end{center}

On the other hand, exactly half of the edges crossing the cut $\ell_\Alg$ are oriented from red to blue since $G$ is balanced. Each bad vertex gives rise to at most two of these edges, so we have that $\partial\ell_\Alg\cdot|E| / 2 \leq 2 \epsilon |V|$ which gives $\partial\ell_\Alg \leq 2\epsilon$, as required. This proves Theorem~\ref{thm:reduction} for $\Delta = 4$.

\paragraph{Gadgets for $\Delta = 3$.} The maximum degree used in the gadgets can be reduced to 3 by the following modification. The red gadget replaces a vertex $v\in V$ by a path of length 3.
\begin{center}
    \includegraphics[page=5]{figs.pdf}
\end{center}
Again, to achieve a \Apx{1.499} of $\BVC$ on $\Pi(G,\red)$ the algorithm $\Alg$ has to make a choice: either leave out the middle black vertex or the middle white vertex from the vertex cover. Supposing $\Alg$ leaves out the middle black, the blue gadget is defined to be identical to the red gadget with an additional white vertex connected to the middle black one.

After simulating $\Alg$ on an instance $\Pi(G,\ell)$ we define $\ell_\Alg(v)=\blue$ iff $\Alg$ outputs only black nodes at the gadget at $v$. The recut analysis will then give $\partial\ell_\Alg\leq 4\epsilon$.

\subsection{Recut is hard on expanders}

Intuitively, the difficulty in computing a small red/blue cut in the $\RC$ problem stems from the inability of an algorithm $\Alg$ to overcome the neighbourhood expansion of an input graph in $r=o(\log n)$ steps---an algorithm cannot hide the red/blue boundary as the radius-$r$ neighbourhoods themselves might have large boundaries.

To formalise this intuition, we use as a basis for our lower-bound construction an infinite family $\mathcal{F}$ of $4$-regular \emph{$\delta$-expander graphs}, where each $G=(V,E)\in\mathcal{F}$ satisfies the edge expansion condition
\begin{equation} \label{eq:expansion}
 e(S,V\smallsetminus S) \geq \delta\cdot |S|\qquad\text{for all}\enspace S\subseteq V,\enspace |S|\leq n/2.
\end{equation}
Here, $e(S,V\smallsetminus S)$ is the number of edges leaving $S$ and $\delta > 0$ is an absolute constant independent of $n=|V|$. On such graphs it is enough for us to force an algorithm $\Alg$ to output a \emph{nearly balanced recut} $\ell_\Alg$ having both colour classes close to $n/2$ in size. This is because if the number of, say, the red nodes is
\[
|\ell_\Alg^{-1}(\red)| = n/2 - o(n),
\]
then the expansion property \eqref{eq:expansion} implies that
\[
\partial\ell_\Alg \geq \delta/4 - o(1).
\]
That is, $\Alg$ computes a recut of size $\Omega(\delta)$.

Indeed, the following simple fooling trick makes up the very core of our argument.

\begin{lemma} \label{lem:fooling}
Suppose $\Alg$ produces a feasible solution for the $\RC$ problem in time $r=o(\log n)$. Then for each $4$-regular graph $G$ there exists an input labelling for which $\Alg$ computes a nearly balanced recut.
\end{lemma}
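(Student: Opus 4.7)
The plan is to use a standard hybrid/interpolation argument to force $\Alg$ to produce an output that is roughly half red and half blue. Fix any 4-regular graph $G = (V, E)$ with $|V| = n$ and enumerate the vertices as $v_1, \dots, v_n$. Consider the sequence of labellings $\ell_0, \ell_1, \dots, \ell_n$, where $\ell_i$ assigns $\blue$ to $v_1, \dots, v_i$ and $\red$ to the remaining vertices. Then $\ell_0$ is the all-red labelling and $\ell_n$ is the all-blue labelling, so by feasibility constraints (a) and (b), $\Alg$ must output all $\red$ on $(G, \ell_0)$ and all $\blue$ on $(G, \ell_n)$.

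Let $R(i)$ denote the number of vertices on which $\Alg$ outputs $\red$ when run on $(G, \ell_i)$. Then $R(0) = n$ and $R(n) = 0$. The key observation is that $\ell_{i-1}$ and $\ell_i$ differ only at the single vertex $v_i$, so the outputs of $\Alg$ on these two instances can differ only at vertices $u$ whose radius-$r$ neighbourhood contains $v_i$, i.e.\ at vertices $u \in B_G(v_i, r)$. Consequently
\[
|R(i) - R(i-1)| \;\leq\; |B_G(v_i, r)| \;\leq\; 1 + 4 \sum_{j=0}^{r-1} 3^{j} \;\leq\; 4 \cdot 3^{r}.
\]
Since $r = o(\log n)$, we have $3^{r} = 2^{r \log_2 3} = n^{o(1)}$, so each step of the hybrid changes $R$ by at most $n^{o(1)} = o(n)$.

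Now apply the discrete intermediate value theorem: since $R$ starts at $n$, ends at $0$, and changes by $o(n)$ between consecutive steps, there must exist an index $i^{*}$ with
\[
\bigl| R(i^{*}) - n/2 \bigr| \;=\; o(n).
\]
Taking $\ell := \ell_{i^{*}}$ as the input labelling, $\Alg$ produces a recut $\ell_\Alg$ with $|\ell_\Alg^{-1}(\red)| = n/2 \pm o(n)$, which is nearly balanced. The only subtle point is verifying the neighbourhood-ball bound, which follows immediately from $4$-regularity and the hypothesis $r = o(\log n)$; there is no real obstacle beyond ensuring the hybrid is well-defined (anonymity of the toy model means $\Alg$'s output depends only on $(G, \ell_i)$, not on any auxiliary identifiers, so the hybrid steps are meaningful).
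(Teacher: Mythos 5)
Your proof is correct and uses exactly the same hybrid/interpolation argument as the paper: same sequence of labellings $\ell_0,\dots,\ell_n$, same locality observation that flipping $v_i$ only affects outputs inside $B_G(v_i,r)$, and the same discrete intermediate-value conclusion. The only cosmetic difference is that you bound $|B_G(v_i,r)|$ by $4\cdot 3^r$ (exploiting $4$-regularity) while the paper uses the cruder $5^r$; both are $o(n)$ for $r=o(\log n)$, so the argument is unchanged.
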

\begin{proof}
Fix an arbitrary ordering $v_1,v_2,\ldots,v_n$ for the vertices of $G$ and define a sequence of labellings
$\ell^0,\, \ell^1,\, \ldots,\, \ell^n$
by setting $\ell^i(v_j) = \blue$ iff $j\leq i$. That is, in $\ell^0$ all nodes are red, in $\ell^n$ all nodes are blue, and $\ell^i$ is obtained from $\ell^{i-1}$ by changing the colour of $v_i$ from red to blue.

When we switch from the instance $(G,\ell^{i-1})$ to $(G,\ell^i)$ the change of $v_i$'s colour is only registered by nodes in the radius-$r$ neighbourhood of $v_i$. This neighbourhood has size $|B_G(v_i,r)|\leq 5^r = o(n)$, and so the number of red nodes in the outputs $\ell^{i-1}_\Alg$ and $\ell^i_\Alg$ of $\Alg$ can only differ by $o(n)$. As, by assumption, we have that $\Alg$ computes the labelling $\ell^0_\Alg=\red$ on $(G,\ell^0)$ and the labelling $\ell^n_\Alg=\blue$ on $(G,\ell^n)$, it follows by continuity that some labelling in our sequence must force $\Alg$ to output $n/2-o(n)$ red nodes.
\end{proof}

We now have all the ingredients for the lower-bound proof: We can take $\delta = 2 - \sqrt{3}$ if we choose $\mathcal{F}$ to be the family of $4$-regular Ramanujan graphs due to Morgenstern~\citep{morgenstern94existence}. These graphs are tree-like, as they have girth $\Theta(\log n)$. They can be made into balanced digraphs since a suitable orientation can always be derived from an Euler tour. Thus, $\mathcal{F}$ consists of balanced $4$-regular tree-like digraphs. Lemma~\ref{lem:fooling} together with the discussion above imply that every algorithm for $\RC$ produces a recut of size $\Omega(\delta)$ on some labelled graph in $\mathcal{F}$. Hence, the contrapositive of Theorem~\ref{thm:reduction} proves Theorem~\ref{thm:main} for our deterministic toy algorithms.

\section{Randomised lower bound} \label{sec:randomised}

Even though our model of deterministic algorithms in Section~\ref{sec:deterministic} is an unusually weak one, we can quickly recover the standard \local{} model from it by equipping the nodes with independent sources of randomness. In particular, as is well known, each node can choose an identifier uniformly at random from, e.g., the set $\{1,2,\ldots,n^3\}$, and this results in the identifiers being globally unique with probability at least $1-1/n$.

When discussing randomised algorithms many of the simplifying assumptions made in Section~\ref{sec:deterministic} no longer apply. For example, a randomised algorithm need not produce the same output on every node of a locally homogeneous graph. Consequently, the homogeneous feasibility constraints in the $\RC$ problem do not strictly make sense for randomised algorithms.

However, we can still emulate the same proof strategy as in Section~\ref{sec:deterministic}: we force the randomised algorithm to output a nearly balanced recut with high probability. Below, we describe this strategy in case of the easy-to-analyse ``$\Delta = 4$'' gadgets with the understanding that the same analysis can be repeated for the ``$\Delta = 3$'' gadgets with little difficulty.

\subsection{Repeating Section~\ref{sec:deterministic} for randomised algorithms}
Fix a randomised algorithm $\Alg$ with running time $r=o(\log n)$ and let $G=(V,E)$, $n=|V|$, be a large $4$-regular expander.

Again, we start out with the all-red instance. We denote by $W$ and $B$ the number of black and white nodes output by $\Alg$ on $\Pi(G,\red)$. As each of the edges $e_v$ must be covered, we have that
\[
W + B \geq n.
\]
Hence, by linearity of expectation, at least one of $\E[W] \geq n/2$ or $\E[B] \geq n/2$ holds. We assume that $\E[W] \geq n/2$; the other case is symmetric.

In reaction to $\Alg$ preferring white nodes, the blue gadgets are now defined exactly as in Section~\ref{sec:deterministic} with an additional white vertex. Furthermore, for any input $\ell\colon V\to\{\red,\blue\}$ we interpret the output of $\Alg$ on $\Pi(G,\ell)$ as defining an output labelling $\ell_\Alg$ of $V$, where, again, $\ell_\Alg(v)=\blue$ iff $\Alg$ outputs only the black node at the gadget at $v$. This definition translates our assumption of $\E[W] \geq n/2$ into
\begin{equation} \label{eq:r-red}
\E[R(\red)] \geq n/2,
\end{equation}
where $R(\ell) := |\ell_\Alg^{-1}(\red)|$ counts the number of gadgets (i.e., vertices of $G$) relabelled red by $\Alg$ on $\Pi(G,\ell)$.

If $\Alg$ relabels a blue gadget red, it must output at least two nodes at the gadget. This means that the size of the solution output by $\Alg$ on $\Pi(G,\blue)$ is at least $n + R(\blue)$. Thus, if $\Alg$ is to produce a \Apx{3/2} on $\Pi(G,\blue)$ in expectation, we must have that
\begin{equation} \label{eq:r-blue}
\E[R(\blue)] \leq n/2.
\end{equation}

The inequalities \eqref{eq:r-red} and \eqref{eq:r-blue} provide the necessary boundary conditions (replacing the feasibility constraints of $\RC$) for the argument of Lemma~\ref{lem:fooling}: by continuously changing the instance $(G,\red)$ into $(G,\blue)$ we may find an input labelling $\ell^*$ achieving
\begin{equation} \label{eq:expectation}
\E[R(\ell^*)] = n/2 - o(n).
\end{equation}

It remains to argue that $\Alg$ outputs a nearly balanced recut not only ``in expectation'' but also with high probability.

\subsection{Local concentration bound}

Focusing on the instance $\Pi(G,\ell^*)$ we write $R=R(\ell^*)$ and
\begin{equation} \label{eq:sum}
R = \sum_{v\in V} X_v, 
\end{equation}
where $X_v\in\{0,1\}$ indicates whether $\Alg$ relabels the gadget at $v$ red.

The variables $X_v$ are not \emph{too} dependent: the $2r$th power of $G$, denoted $G^{2r}$, where $u,v\in V$ are joined by an edge iff $B_G(v,r)\cap B_G(u,r) \neq \varnothing$, is \emph{a dependency graph} for the variables $X_v$. Every independent set $I\subseteq V$ in $G^{2r}$ corresponds to a set $\{X_v\}_{v\in I}$ of mutually independent random variables. Since the maximum degree of $G^{2r}$ is at most $\max_v |B_G(v,2r)| = o(n)$, this graph can always be partitioned into $\chi(G^{2r})=o(n)$ independent sets.

Indeed, Janson~\citep{janson04large} presents large deviation bounds for sums of type \eqref{eq:sum} by applying Chernoff--Hoeffding bounds for each colour class in a $\chi(G^{2r})$-colouring of $G^{2r}$. For any $\epsilon > 0$, Theorem 2.1 in Janson~\citep{janson04large}, as applied to our setting, gives
\begin{equation} \label{eq:concentration}
\Pr( R \geq \E[R] + \epsilon n) \leq \exp\left(-2\frac{(\epsilon n)^2}{\chi(G^{2r})\cdot n}\right) \to 0,\quad\text{as}\ n\to \infty,
\end{equation}
and the same bound holds for $\Pr\left( R \leq \E[R] - \epsilon n\right)$. That is, $R$ is concentrated around its expectation.

In conclusion, the combination of \eqref{eq:expectation} and \eqref{eq:concentration} implies that, for large $n$, the algorithm $\Alg$ outputs a nearly balanced recut on $\Pi(G,\ell^*)$ with high probability. By the discussion in Section~\ref{sec:deterministic}, this proves Theorem~\ref{thm:main}.

\section{Acknowledgements}

Many thanks to Valentin Polishchuk for discussions. This work was supported in part by the Academy of Finland, Grants 132380 and 252018.

\setlength{\bibsep}{2ex minus 2ex}
\bibliographystyle{myplain}

\end{document}